\documentclass[conference]{IEEEtran}
\IEEEoverridecommandlockouts
\usepackage{cite}
\usepackage{amsmath,amssymb,amsfonts,amsthm}
\usepackage{algorithmic}
\usepackage{graphicx}
\usepackage{textcomp}
\usepackage{xcolor}
\usepackage{color}
\usepackage{algorithmic}
\usepackage{url}
\usepackage{lipsum}
\usepackage{multirow}
\usepackage{afterpage}
\usepackage{bbm}
\usepackage{dsfont}
\usepackage[ruled,lined,boxed]{algorithm2e}
\usepackage{mathtools, nccmath}{\tiny {\tiny }}
\usepackage{lipsum}
\usepackage{bm}
\usepackage{threeparttable}
\usepackage{svg}
\usepackage{bbm}
\usepackage{amsmath}
\usepackage{csquotes}
\usepackage[T1]{fontenc}
\usepackage{threeparttable}
\usepackage{multirow}
\usepackage{makecell}
\usepackage{tikz}
\usepackage{tikz-3dplot}
\usepackage{pgfplots}
\pgfplotsset{compat=1.15}
\usepackage{subcaption}
\usepackage{siunitx}

\DeclareMathAlphabet{\mathbcal}{OMS}{cmsy}{b}{n}
\def\BibTeX{{\rm B\kern-.05em{\sc i\kern-.025em b}\kern-.08em
		T\kern-.1667em\lower.7ex\hbox{E}\kern-.125emX}}
 
\newtheorem{prop}{Proposition}

\newtheoremstyle{iremark}
{\topsep}   
{\topsep}   
{\upshape}  
{0pt}       
{\itshape}  
{:}         
{5pt plus 1pt minus 1pt} 
{\thmname{#1}\thmnumber{ \itshape#2}\thmnote{ (#3)}} 
\theoremstyle{iremark}
\newtheorem{remark}{Remark}

\allowdisplaybreaks
\mathchardef\mhyphen="2D
\makeatletter 
\let\myorg@bibitem\bibitem
\def\bibitem#1#2\par{%
	\@ifundefined{bibitem@#1}{%
		\myorg@bibitem{#1}#2\par
	}{%
		\begingroup
		\color{\csname bibitem@#1\endcsname}%
		\myorg@bibitem{#1}#2\par
		\endgroup
	}%
}
\makeatother

\makeatletter
\def\@IEEEtitleabstractindextextfont{%
    \fontsize{10pt}{10pt}\bfseries\selectfont 
}
\makeatother

\begin{document}
{\title{A General EM-Based Channel Model for Reconfigurable Antenna Systems\\
}

\author{
  \IEEEauthorblockN{
   Chen~Xu~
   and Xianghao~Yu
  }
  \IEEEauthorblockA{
   Department of Electrical Engineering, City University of Hong Kong, Hong Kong}
  Email: cxu297-c@my.cityu.edu.hk, alex.yu@cityu.edu.hk
 }

\maketitle

\begin{abstract}
Reconfigurable antenna systems (RASs), such as fluid antennas and movable antennas, are poised to play a pivotal role in sixth-generation (6G) systems by dynamically adapting the antenna elements for system performance enhancement.
However, unlocking their full potential requires channel models that accurately capture the influence of antenna configurations on the radiation, propagation, and reception of signals.
Existing channel models suffer from several limitations, such as neglecting polarization effects, being restricted to specific antenna types, or relying on oversimplified assumptions.
In this paper, we propose a general electromagnetic (EM)-based channel model grounded in spherical vector wave expansion (SVWE).
The proposed EM-based channel model captures the impact of antenna position and orientation on the channel gain, thereby making it particularly well-suited for RASs.
The effectiveness and accuracy are validated through comparisons with commercial simulation software, demonstrating excellent agreement in predicted channel gains.
{\color{black}{Moreover, it is shown that antenna orientation is a critical factor governing communication performance, and that dynamically adjusting the antenna orientation yields up to 70\% improvement in achievable communication rate compared to a fixed-antenna configuration.}}
\end{abstract}



\section{Introduction}
As an emerging antenna technology, reconfigurable antenna systems (RASs), such as fluid antennas~\cite{9264694} and movable antennas~\cite{liu2025ma,10286328}, dynamically adjust antenna positions and orientations to optimize communication performance in real time, thereby holding significant promise for sixth-generation (6G) networks.
However, to fully unlock the potential of RASs, accurate channel models that precisely capture the electromagnetic (EM) interaction between the antennas and the propagation environment are essential.
Specifically, the channel model is expected to account for the vectorial nature of EM propagation, capture the radiation and reception characteristics of antennas, and be applicable to arbitrary reconfigurable antenna element types.

In response to these requirements, recent efforts have shifted toward EM-based channel modeling, which seeks to construct channel models from a fundamental EM perspective.
For instance, to obtain a physically consistent representation of fading channels, the authors of \cite{11006094} modeled small-scale fading as a random \emph{scalar} field.
However, this model neglects the \emph{vectorial} nature of EM fields, and it fails to fully capture key EM effects, such as polarization mismatch, thereby leading to inaccurate characterization of the channel gain in RASs.
In \cite{11258091}, building upon the closed-form expression of the vector radiation field, the authors derived an EM-based channel model for RASs. 
Although this model intrinsically captures vectorial EM effects, it is restricted to scenarios where both the transceivers employ half-wave dipoles, thereby significantly limiting its applicability to more general antenna configurations.
To enhance the generality of channel modeling, the authors of \cite{10500425} and \cite{10500751} established a channel model based on dyadic Green's functions and applied it to extremely large-scale multiple-input multiple-output (XL-MIMO) and holographic MIMO (HMIMO) systems, respectively.
In principle, the model is applicable to arbitrary antenna types.
However, it requires prior knowledge of the antenna's surface current distribution, which is inaccessible in practice, limiting its applicability in real-world scenarios.
Another theoretical framework, known as the spherical vector wave expansion (SVWE), can be employed to analyze the EM propagation between arbitrary antennas.
In contrast to Green's function-based methods, SVWE only relies on sparsely sampled values of the EM field, which is readily measurable in practical engineering setups, to analytically characterize the EM propagation between transmit and receive antennas.
Owing to this, SVWE has been widely adopted in engineering applications such as antenna measurements~\cite{Hansen_1988}.
The authors of \cite{5159555} first employed SVWE to channel modeling.
Nevertheless, for simplicity, the derived channel model neglected the impact of the transmit antenna's type, position, and orientation on the channel gain, making it incompatible with RASs.


In this paper, we develop a comprehensive EM-based channel model which intrinsically accounts for the impact of arbitrary antenna positions and orientations on the channel gain, making it particularly well-suited for the performance characterization and optimization of RASs.
In particular, by decomposing the signal propagation into three fundamental EM processes, we quantitatively analyze their characteristics within the SVWE framework.
The proposed channel model relies on practically reasonable assumptions and is applicable to arbitrary antenna types, thereby offering broad applicability and practical utility.
The accuracy and validity of the proposed model are evaluated through full-wave simulations, demonstrating excellent agreement in channel gain predictions.

\section{Analysis of Radiation, Propagation, and
Reception Processes}\label{sec:EM_char_analysis}

Consider an RAS-enabled MIMO system comprising  $N_{\mathrm{t}}$ transmit and $N_{\mathrm{r}}$ receive antennas.
Denote the transmit signal vector as $\mathbf{v} \in \mathbb{C}^{N_{\mathrm{t}} \times 1}$.
The corresponding receive signal vector $\mathbf{w} \in \mathbb{C}^{N_{\mathrm{t}} \times 1}$ is then given by
\begin{equation}
	\begin{aligned}
  		\mathbf{w} = \mathbf{H} \mathbf{v} + \mathbf{z},
  	\end{aligned}
\end{equation}
where $\mathbf{H} \in \mathbb{C}^{N_{\mathrm{r}} \times N_\mathrm{t}}$ is the channel matrix and $\mathbf{z} \in \mathbb{C}^{N_{\mathrm{r}} \times 1}$ represents the additional white Gaussian noise (AWGN) vector whose entries are independent and identically distributed (i.i.d.) complex Gaussian variables with zero mean and variance $\sigma_\mathrm{n}^2$.

\begin{figure}[t]
	\centering
	\hspace{4mm}\includegraphics[trim=10 1 0 11, clip, width=0.43\textwidth]{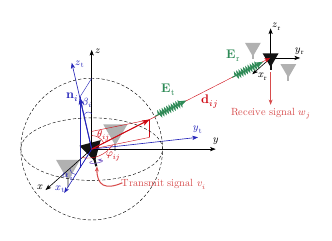}
	\caption{Illustration of the channel model between the $i$-th transmit antenna and the $j$-th receive antenna.}
	\label{fig:MIMO_setting_illu} 
	\vspace{-3mm}  
\end{figure}

As shown in Fig. \ref{fig:MIMO_setting_illu}, local coordinate systems $(x_\mathrm{t},y_\mathrm{t},z_\mathrm{t})$ and $(x_\mathrm{r},y_\mathrm{r},z_\mathrm{r})$ are established with their origins at the $i$-th transmit and $j$-th receive antennas, respectively, and their $z$-axes are aligned with the normal direction of the corresponding antennas.
Let $h_{ij}$ denote the EM-based channel gain from the $i$-th transmit antenna to the $j$-th receive antenna, i.e., $h_{ij}=[\mathbf{H}]_{i,j}$.
A global coordinate system $(x,y,z)$ is then defined with its origin at the transmit antenna and its $z$-axis parallel to the $z_\mathrm{r}$-axis.
In this global frame $(x,y,z)$, the normal direction vector $\mathbf{n}_i$ of the $i$-th transmit antenna and the displacement vector $\mathbf{d}_{ij}$ from the $i$-th transmit antenna to the $j$-th receive antenna are characterized by azimuth-elevation angle pairs $(\alpha_i,\beta_i)$ and $(\varphi_{ij},\theta_{ij})$, respectively.

In the remainder of this section, within the SVWE framework, we analyze the channel characteristics from the $i$-th transmit antenna to the $j$-th receive antenna from an EM propagation perspective. 
As illustrated in Fig. \ref{fig:MIMO_setting_illu}, the transformation from the transmit signal $v_i$ to the received signal $w_j$, which underlies the classical wireless channel model $h_{ij}\triangleq w_j/v_i$, can be divided by three fundamental EM processes:
 A) the conversion of the input signal $v_i$ into the radiated electric field $\mathbf{E}_{\mathrm{t}}$ by the $i$-th transmit antenna;
 B) the propagation of $\mathbf{E}_{\mathrm{t}}$ through free space, during which the portion impinging on the receive antenna becomes the incident electric field $\mathbf{E}_{\mathrm{r}}$;
 and C) the conversion of the incident field $\mathbf{E}_{\mathrm{r}}$ into the output signal $w_j$ at the $j$-th receive port. 
 These processes correspond to the EM characterizations of radiation, propagation, and reception, which will be conducted in the following three subsections.

\subsection{Radiation Characterization}

The vector Helmholtz equation governs any time-harmonic EM field in source-free regions. 
As its general solution in spherical coordinates, the spherical vector wave functions (SVWFs) constitute a complete and orthogonal basis for representing arbitrary time-harmonic EM fields. 
By expanding the field as a linear superposition of SVWFs, the SVWE provides a mathematically complete and physically exact framework for electromagnetic field characterization \cite{Hansen_1988}.
Unlike plane- and cylindrical-wave expansions, SVWE leverages the inherent symmetry of SVWFs, and thereby enables a natural and efficient characterization of how antenna rotations and translations shape the EM field.
This makes SVWE particularly well-suited for analyzing channel characteristics in RAS-enabled systems.

According to the SVWE framework, when observed in the local spherical coordinate system $(r_\mathrm{t},\theta_\mathrm{t},\varphi_\mathrm{t})$ associated with $(x_\mathrm{t},y_\mathrm{t},z_\mathrm{t})$, the radiated electric field $\mathbf{E}_{\mathrm{t}}$ can be decomposed as
\begin{equation}\label{eq:elec_field_as_sum_spherical_func}
	\begin{aligned}
  		\mathbf{E}_{\mathrm{t}}(r_\mathrm{t},\theta_\mathrm{t},\varphi_\mathrm{t})=\frac{k v_i}{\sqrt{\eta}}  \sum_{s=1}^{2}\sum_{n=1}^{N}\sum_{m=-n}^{n}   T_{smn}\mathbf{F}_{smn}^ {(3)} (r_\mathrm{t},\theta_\mathrm{t},\varphi_\mathrm{t}),
  	\end{aligned}
\end{equation}
where $k$ and $\eta$ denote the wavenumber and the wave impedance of the medium, respectively. 
The truncation index $N$ is a constant determined by the electrical size of the antenna, typically chosen as $N = \lceil ka \rceil$, where $a$ is the radius of the smallest sphere enclosing the antenna.
In (\ref{eq:elec_field_as_sum_spherical_func}), $\mathbf{F}^{(3)}_{smn}$ represents the outward-propagation spherical vector wave function (SVWF) of mode $(s,m,n)$ and $T_{smn}$ is the corresponding expansion coefficient.
The explicit expressions of SVWFs are provided in Appendix \ref{sec:appendix_intro_svwfs}.

For a specific SVWF $\mathbf{F}^{(c)}_{smn}$, the index $c \in \{ 1,2,3,4\}$, where $c=1$ or $2$ corresponds to standing waves, and $c=3$ and $4$ represent outward- and inward-propagation traveling waves, respectively.
The index $s$ distinguishes the wave polarization type, with $s=1$ denoting transverse electric (TE) modes and $s=2$ denoting transverse magnetic (TM) modes.
The indices $n$ and $m$ characterize the angular dependence of the SVWFs.
Specifically, $n$ determines the polar variation and overall angular complexity, while 
$m$ governs the azimuthal periodicity. 
The explicit formulations of SVWFs $\mathbf{F}^{(c)}_{smn}$ are provided in \cite[eq.~(2.20), (2.21)]{Hansen_1988}.

The coefficient $T_{smn}$ represents the antenna's ability to convert an excitation signal $v_i$ into the corresponding spherical vector wave SVWF of mode index $(s,m,n)$,  also referred to as the radiation coefficient. 
It is fully determined by the antenna's physical properties, e.g., materials, structure, and feeding network, and therefore $T_{smn}$ are typically known a priori for channel modeling.

\begin{remark}
It is worth noting that conventional Green's function-based approaches\cite{10500425,10500751} require dynamic prior knowledge of the antenna's surface current distribution, which is typically inaccessible in real-world engineering scenarios.
In contrast, SVWE only relies on static and measurable radiation coefficients $T_{smn}$ to reconstruct the antenna's radiated electric field, making it particularly well-suited for characterizing channel models in practical communication systems.
\end{remark}

\subsection{Propagation Characterization}
In an RAS-enabled system, variation in the orientation of the transmit antenna alters the radiation field $\mathbf{E}_{\mathrm{t}}$, which in turn significantly affects the incident field $\mathbf{E}_{\mathrm{r}}$ at the receive antenna. 
Consequently, the propagation characteristics from $\mathbf{E}_{\mathrm{t}}$ to $\mathbf{E}_{\mathrm{r}}$ must be carefully analyzed to accurately model the impact of transmitter orientation variations on channel gain.
Note that in (\ref{eq:elec_field_as_sum_spherical_func}), we obtain the expression for the radiated electric field $\mathbf{E}_{\mathrm{t}}$ in the coordinate system $(r_\mathrm{t},\theta_\mathrm{t},\varphi_\mathrm{t})$.
To quantitatively analyze the transformation from $\mathbf{E}_{\mathrm{t}}$ to $\mathbf{E}_{\mathrm{r}}$,
we first transform $\mathbf{E}_{\mathrm{t}}$ into the spherical coordinate system $(r_\mathrm{r},\theta_\mathrm{r},\varphi_\mathrm{r})$ associated with $(x_\mathrm{r},y_\mathrm{r},z_\mathrm{r})$, and then isolate the inward-propagating component, which constitutes the incident field $\mathbf{E}_{\mathrm{r}}$ at the receiver.
In fact, owing to the rotation and translation properties of SVWFs, $\mathbf{E}_{\mathrm{t}}$ can be rigorously transformed from coordinate system $(r_\mathrm{t},\theta_\mathrm{t},\varphi_\mathrm{t})$ to $(r_\mathrm{r},\theta_\mathrm{r},\varphi_\mathrm{r})$~\cite{Hansen_1988}.

Specifically, consider a coordinate system $(x,y,z)$ that is first rotated by an angle $\varphi$ about the $z$-axis and then by an angle $\theta$ about the new $y$-axis, resulting in the rotated system $(x',y',z')$. The rotation property of SVWF can be expressed as
\begin{equation}\label{eq:rotation_formula}
	\begin{aligned}
  		\mathbf{F}^{(c)}_{smn}(r,\theta,\varphi) = \sum_{\mu=-n}^n  e^{\jmath m \varphi} d^n_{\mu m}(\theta)  \mathbf{F}^{(c)}_{s \mu n}(r',\theta',\varphi'),
  	\end{aligned}
\end{equation}
where the rotation coefficient $d^n_{\mu m}(\theta)$ is a complex-valued function, and its explicit expression is provided in \cite[eq.~(A2.5)]{Hansen_1988}.

If the coordinate system $(x,y,z)$ is translated by a distance $A$ along the $z$-axis to obtain $(x',y',z')$, the translation property of SVWF can be expressed as
\begin{equation}\label{eq:translation_formula}
	\begin{aligned}
  		\mathbf{F}_{{s \mu n}}^{(c)}(r, \theta, \varphi) = 
								\sum_{\sigma=1}^{2} 
								\sum_{\substack{\nu=|\mu| \\ \nu \neq 0}}^{N} 
								C_{\sigma \mu \nu}^{{sn (c)}}(k A) 
								\mathbf{F}_{\sigma \mu \nu}^{(1)}(r', \theta', \varphi')
  	\end{aligned}
\end{equation}
for $r' < |A|$.
The translation coefficient $C_{\sigma \mu \nu}^{{sn(c)}}(k A)$ is a complex-valued function, and its explicit expression is provided in \cite[eq.~(A3.3)]{Hansen_1988}.

The analytical expressions (\ref{eq:rotation_formula}) and (\ref{eq:translation_formula}) for the rotation and translation of SVWFs fundamentally convey a key principle that a single-mode spherical wave, when subjected to a coordinate transformation, becomes a superposition of multiple spherical modes in the new coordinate system.
By successively applying (\ref{eq:rotation_formula}) and (\ref{eq:translation_formula}), the SVWF $\mathbf{F}_{{s m n}}^{(c)}(r_\mathrm{t}, \theta_\mathrm{t}, \varphi_\mathrm{t})$ can be expressed in an arbitrarily rotated and translated coordinate system $(r_\mathrm{r}, \theta_\mathrm{r}, \varphi_\mathrm{r})$ as $\mathbf{F}_{{s m n}}^{(c)}(r_\mathrm{r}, \theta_\mathrm{r}, \varphi_\mathrm{r})$.
Substituting $\mathbf{F}_{{s m n}}^{(c)}(r_\mathrm{r}, \theta_\mathrm{r}, \varphi_\mathrm{r})$ into (\ref{eq:elec_field_as_sum_spherical_func}) yields the SVWE of the radiated field $\mathbf{E}_{\mathrm{t}}$ in the coordinate system $(r_\mathrm{r}, \theta_\mathrm{r}, \varphi_\mathrm{r})$, namely $\mathbf{E}_{\mathrm{t}}(r_\mathrm{r}, \theta_\mathrm{r}, \varphi_\mathrm{r})$.

Once the SVWE of $\mathbf{E}_{\mathrm{t}}(r_\mathrm{r}, \theta_\mathrm{r}, \varphi_\mathrm{r})$ is derived, by retaining the inward-propagating modes, i.e., $c=4$, we obtain the SVWE of the incident field $\mathbf{E}_{\mathrm{r}}$ at the receive antenna.
This procedure provides a complete electromagnetic description of the propagation from the radiation field $\mathbf{E}_{\mathrm{t}}$ to the incident field $\mathbf{E}_{\mathrm{r}}$.

\subsection{Reception Characterization}
Suppose that the incident electric field $\mathbf{E}_{\mathrm{r}}$ at the receive antenna admits an SVWE in the form of
\begin{equation}\label{eq:Er}
	\begin{aligned}
  		\mathbf{E}_{\mathrm{r}}(r_\mathrm{r},\theta_\mathrm{r},\varphi_\mathrm{r})= \sum_{s=1}^{2}\sum_{n=1}^{N}\sum_{m=-n}^{n}   Q_{smn}\mathbf{F}_{smn}^ {(4)} (r_\mathrm{r},\theta_\mathrm{r},\varphi_\mathrm{r}).
  	\end{aligned}
\end{equation}
Then, the received signal $w_j$ induced by $\mathbf{E}_{\mathrm{r}}$ at the $j$-th receive antenna is expressed as 
\begin{equation}\label{eq:Er_receive_signal_form}
	\begin{aligned}
  		w_j = \sum_{s=1}^{2}\sum_{n=1}^{N}\sum_{m=-n}^{n} Q_{smn} R_{smn},
  	\end{aligned}
\end{equation}
where $R_{smn}$ is the reception coefficient, characterizing the antenna's ability to convert an incident spherical mode $(s,m,n)$ into the receive signal at ports.

For reciprocal antennas, which constitute the vast majority of practical radiating elements, the reception and radiation characteristics are related by electromagnetic reciprocity.
Specifically, the reception coefficient $R_{smn}$ can be directly determined from the radiation coefficient $T_{smn}$~\cite{Hansen_1988}, namely
\begin{equation}
	\begin{aligned}
  		R_{smn} = (-1)^m T_{s,-m,n}.
  	\end{aligned}
\end{equation}

Building upon the above analysis of the three fundamental physical processes, i.e., radiation, propagation, and reception, that determine the EM characteristics of the wireless channel, we have established a foundational understanding of the EM-based channel model. 
This provides the theoretical foundation for developing a channel model tailored to RAS-enabled MIMO systems.

\section{EM-Based Channel Model for RAS-Enabled MIMO}

Building on the SVWE analysis in Sec. \ref{sec:EM_char_analysis}, we then derive the explicit expression for the EM-based channel model $h_ij$ of the RAS-enabled MIMO system.
To enable a compact matrix formulation of the derived EM-based channel gain, we introduce a one-to-one index mapping $(s,m,n) \leftrightarrow  j$ that compresses the multi-dimensional SVWF mode indices into a single index \cite{Hansen_1988}, and denote the total number of spherical vector wave modes as $J$.
By stacking the radiation coefficients $T_{smn}$ and receive coefficients $R_{smn}$ into the radiation coefficient vector $\mathbf{t} \in \mathbb{C}^{J \times 1}$ and reception coefficient vector $\mathbf{r} \in \mathbb{C}^{J \times 1}$, respectively, the following proposition provides an explicit expression for the $h_{ij}$.

\begin{prop}\label{prop:hij}
	The EM-based channel gain $h_{ij}$ is given by
\begin{equation}\label{eq:EM_based_SISO_channel_model}
	\begin{aligned}
  		h_{ij} = \frac{e^{\jmath k \Vert \mathbf{d}_{ij} \Vert}}{{k \Vert \mathbf{d}_{ij} \Vert}} \mathbf{r}^T \mathbf{G}(\mathbf{n}_{i},\mathbf{d}_{ij}) \mathbf{t},
  	\end{aligned}
\end{equation}
where $\jmath = \sqrt{-1}$. By leveraging the index mapping $(s,m,n) \leftrightarrow p$ and $(\sigma,\rho,\nu) \leftrightarrow q$, the $p,q$-th element of the matrix $\mathbf{G}(\mathbf{n}_{i},\mathbf{d}_{ij}) \in \mathbb{C}^{J \times J}$ is given by
\begin{equation}\label{eq:G_pq}
	\begin{aligned}
  		[{\mathbf{G}}(\mathbf{n}_{i},\mathbf{d}_{ij})]_{p,q} & =  \frac{k}{2\sqrt{\eta}} a(n,\nu)   e^{-\jmath \rho \varphi_{ij}} \\
  		&\sum_{\mu=-n}^{n} e^{\jmath \mu ( \varphi_{ij} - \alpha_i)}  
  		   b^{s,\mu,n}_{\sigma,\rho,\nu}(\theta_{ij}) d^{n}_{\mu m}(-\beta_{i}).
	\end{aligned}
\end{equation}
where
\begin{equation}\label{eq:anv}
	\begin{aligned}
  		a(n,\nu) = &\frac{\jmath ^{\nu-n-1}\sqrt{(2n+1)(2\nu+1)}}{2},
  	\end{aligned}
\end{equation}

\begin{equation}
	\begin{aligned}\label{eq:b_theta}
  		b^{s,\mu,n}_{\sigma,\rho,\nu}(\theta_{ij}) = (-1)^{s+\sigma}  d^{n}_{-1,\mu}(\theta_{ij}) &d^{\nu}_{-1,\rho}(\theta_{ij}) + \\ 
  		 &  d^{n}_{1,\mu}(\theta_{ij}) d^{\nu}_{1,\rho}(\theta_{ij}).
  	\end{aligned}
\end{equation}

\end{prop}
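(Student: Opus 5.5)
We chain the three processes of Sec.~\ref{sec:EM_char_analysis}: express $\mathbf{E}_{\mathrm{t}}$ in the receiver frame, read off the incident coefficients $Q_{\sigma\rho\nu}$, and insert them into (\ref{eq:Er_receive_signal_form}). Because the global frame has its $z$-axis parallel to $z_\mathrm{r}$, only two coordinate changes are needed.

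\textbf{Step 1: align the transmit frame with the global frame.} The transmit frame differs from the global frame by the orientation $(\alpha_i,\beta_i)$ of $\mathbf{n}_i$. We apply the rotation property (\ref{eq:rotation_formula}) with the inverse rotation, i.e.\ angle $-\beta_i$ about $y$ followed by $-\alpha_i$ about $z$. This expresses each $\mathbf{F}^{(3)}_{smn}(r_\mathrm{t},\theta_\mathrm{t},\varphi_\mathrm{t})$ as
\[
\sum_{\mu} e^{-\jmath\mu\alpha_i}\, d^n_{\mu m}(-\beta_i)\, \mathbf{F}^{(3)}_{s\mu n}
\]
in global coordinates. This step produces the factors $d^n_{\mu m}(-\beta_i)$ and the $\alpha_i$ part of the exponential in (\ref{eq:G_pq}).

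\textbf{Step 2: move the origin to the receiver.} The translation property (\ref{eq:translation_formula}) only handles shifts along the $z$-axis. So we first rotate the global frame by $(\varphi_{ij},\theta_{ij})$, making $\mathbf{d}_{ij}$ the new $z$-axis. This gives a second sum over intermediate modes $\kappa$ with coefficients $e^{\jmath\mu\varphi_{ij}} d^n_{\kappa\mu}(\theta_{ij})$. We then translate by $A=\Vert\mathbf{d}_{ij}\Vert$ with (\ref{eq:translation_formula}), which yields coefficients $C^{sn(3)}_{\sigma\kappa\nu}(kA)$. Finally we rotate back by $(-\theta_{ij},-\varphi_{ij})$, which contributes $d^\nu_{\rho\kappa}(-\theta_{ij})\,e^{-\jmath\rho\varphi_{ij}}$.

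Now $z_\mathrm{r}$ is parallel to $z$. We identify the result with the receiver's local frame and keep only the inward-propagating modes. Writing each standing wave as $\mathbf{F}^{(1)}=\tfrac12(\mathbf{F}^{(3)}+\mathbf{F}^{(4)})$ and retaining the $\mathbf{F}^{(4)}$ half produces the factor $\tfrac12$. The coefficient of $\mathbf{F}^{(4)}_{\sigma\rho\nu}$ is then $Q_{\sigma\rho\nu}$. Substituting into (\ref{eq:Er_receive_signal_form}) and collecting terms gives $w_j = v_i\,\mathbf{r}^T(\cdot)\mathbf{t}$, so the bracket defines $\mathbf{G}$ up to the prefactor.

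\textbf{Step 3: the far-field limit (main obstacle).} We must show that the sums over the intermediate index $\kappa$ and the translation coefficients collapse to the stated closed form. For $kA\gg N^2$, the translation coefficient from \cite[eq.~(A3.3)]{Hansen_1988} is dominated by its asymptotic form:
\begin{itemize}
\item it vanishes unless $\kappa=\pm1$;
\item for $\kappa=\pm1$ it is proportional to $\frac{e^{\jmath kA}}{kA}\,\jmath^{\nu-n-1}\sqrt{(2n+1)(2\nu+1)}$, with a relative sign $(-1)^{s+\sigma}$ between the two cases.
\end{itemize}
We will establish this by substituting the large-argument asymptotics of the spherical Hankel functions into the Wigner-3j representation of $C$. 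The $3j$ sums then reduce to $\kappa=\pm1$ via the known plane-wave content of the $m=\pm1$ modes. Restricting to $\kappa=\pm1$ leaves exactly the two products $d^n_{\pm1,\mu}(\theta_{ij})\,d^\nu_{\pm1,\rho}(\theta_{ij})$ in $b$. Here the symmetry $d^\nu_{\rho\kappa}(-\theta)=d^\nu_{\kappa\rho}(\theta)$ converts the back-rotation factor into the stated form. The amplitude gives $a(n,\nu)$, and the factor $\frac{e^{\jmath kA}}{kA}$ is the prefactor in (\ref{eq:EM_based_SISO_channel_model}).

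The remaining constants combine into $\frac{k}{2\sqrt\eta}$: the $\tfrac12$ from Step 2, together with the $\frac{k}{\sqrt\eta}$ carried over from (\ref{eq:elec_field_as_sum_spherical_func}). Tracking the phase and sign conventions of \cite{Hansen_1988} through this asymptotic reduction is where I expect most of the difficulty to lie.
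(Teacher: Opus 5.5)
Your proposal is correct and follows the paper's proof essentially step for step. It uses the same chain of rotations $(-\beta_i,-\alpha_i)$ and $(\varphi_{ij},\theta_{ij})$, the translation by $\Vert\mathbf{d}_{ij}\Vert$ along the new $z$-axis, the back-rotation $(-\theta_{ij},-\varphi_{ij})$, retention of the $\mathbf{F}^{(4)}$ half of each standing wave, and the far-field asymptotics of $C^{sn(3)}_{\sigma\kappa\nu}$, which the paper simply quotes from Hansen rather than rederiving; your explicit use of $d^\nu_{\rho\kappa}(-\theta)=d^\nu_{\kappa\rho}(\theta)$ fills in a step the paper leaves implicit. If you do rederive those asymptotics from the 3j form, note that in the $\sigma\neq s$ part the explicit factor $2\jmath\kappa kA$ makes the leading Hankel term contribute nothing (its $p$-sum is $P^{\kappa}_n(1)P^{-\kappa}_\nu(1)$, which vanishes for $\kappa\neq 0$), so the $O(1/kA)$ result there comes from the next term in the expansion of $h^{(1)}_p$.
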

\begin{IEEEproof}
	Please refer to Appendix \ref{sec:appendix_prof_prop_hij}.
\end{IEEEproof}}

\begin{remark}
It can be observed that the EM-based channel gain given in  (\ref{eq:EM_based_SISO_channel_model}) is expressed as the multiplication of two terms. The first fractional term corresponds to a uniform spherical wave which depends solely on the distance between the transmit and receive antennas. 
The second term $\mathbf{r}^T \mathbf{G}(\mathbf{n}_{i},\mathbf{d}_{ij}) \mathbf{t}$ is a composite factor that incorporates the specific antenna types (reflected by $\mathbf{t}$ and $\mathbf{r}$), the orientation of the transmit antenna ($\alpha_i$ and $\beta_i$), and the relative position between the transmit and receive antennas ($\varphi_{ij}$ and $\theta_{ij}$).
In classical channel modeling for wireless communication systems, only the scalar spherical wave  in the first term is considered. 
These prevent accurate characterizations of how the antenna's position and orientation affect the channel gain.
In contrast, by leveraging the SVWE framework, the proposed EM-based channel model intrinsically accounts for the vectorial nature of EM propagation  in the second term.
Hence, the derived channel model is more effective in characterizing antenna placement and orientation in RAS-enabled MIMO systems. 
\end{remark}

\section{Simulation Results}

{\color{black}{In this section, we validate the accuracy and effectiveness of the proposed channel model, and evaluate the impact of the transmit antenna's position and orientation on the achievable rate under this channel model.}}
We consider a half-wavelength dipole antenna as a setup example at both transmit and receive antennas. 
The antenna's operating wavelength $\lambda=0.1$ m. 
{\color{black}{In the global coordinate system $(x,y,z)$, unless otherwise specified, the transmit antenna is placed at the origin $\mathcal{O}$, and the receive antenna is located at $\mathcal{P}=(8\lambda,10\lambda,8\lambda)$.
The orientation of the transmit antenna is specified by the azimuth angle $\alpha$ and the elevation angle $\beta$.}}
The transmit power $P_\mathrm{t} = 10$ dBm and the noise power $\sigma_{\mathrm{n}}^2 = -20 $ dBm.

\subsection{Validation of the EM-based Channel Model}

{\color{black}{We first validate the capability of the proposed EM-based channel model in capturing the impact of the antenna's orientation on the channel gain.
As shown in Fig. \ref{fig:channel_gain_svwe}, we compute the variation of the point-to-point channel gain between the transmit and receive antennas for different values of $\alpha$ and $\beta$ based on (\ref{eq:EM_based_SISO_channel_model}).
For comparison, we employ the finite element method (FEM), a widely used full-wave EM simulation technique, to calculate the corresponding channel gain at every $\ang{10}$ increment in both $\alpha$ and $\beta$, as depicted in Fig.~\ref{fig:channel_gain_hfss}.
The normalized mean square error (NMSE) between the two results is approximately $-30.75$ dB.}}
\begin{figure}[t]
    \centering
    \hspace{-11pt}
    \begin{subfigure}[t]{0.47\linewidth}
        \centering
        \vspace{-5pt}
        \includegraphics[trim=30 20 21 25, clip, width=\textwidth]{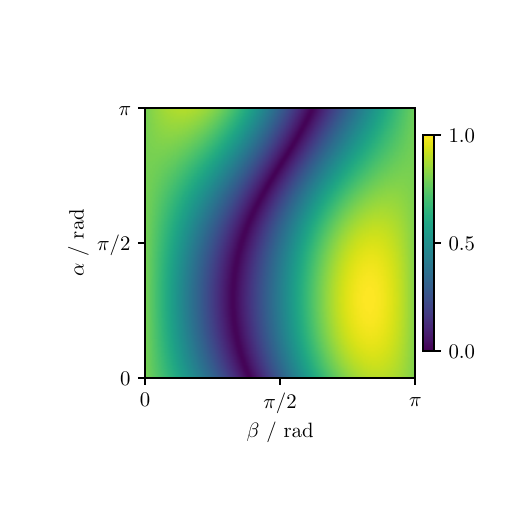}
        \caption{Derived channel model.}
        \label{fig:channel_gain_svwe}
    \end{subfigure}
   \hskip 0.04\columnwidth
    \begin{subfigure}[t]{0.47\linewidth}
        \centering
        \vspace{-5pt}
        \includegraphics[trim=30 20 21 25, clip, width=\textwidth]{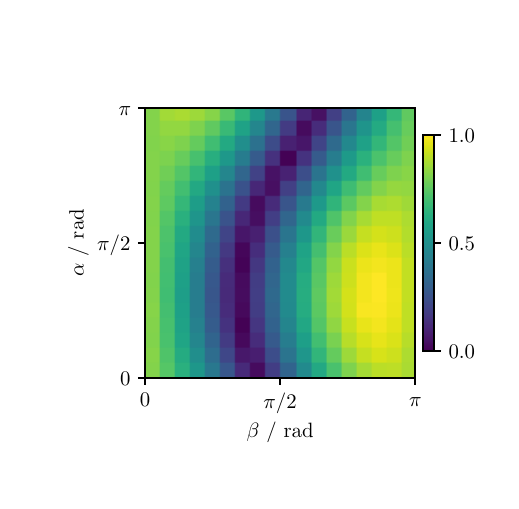}
        \caption{FEM full-wave simulation.}
        \label{fig:channel_gain_hfss}
    \end{subfigure}
    \caption{\color{black}{The variation of the normalized channel gain with respect to the azimuth angle $\alpha$ and elevation angle $\beta$ of the transmit antenna.}}
    \label{fig:channel_gain}
    \vspace{-1pt}
\end{figure}

{\color{black}{We then extend the validation to a typical $16 \times 16$ RAS-enabled MIMO system, where both the transmitter and receiver are equipped with half-wavelength spacing uniform linear arrays (ULAs).
Without loss of generality, we assume that both the transmit and receive antenna arrays are aligned along the positive $x$-axis, with their first elements located at $\mathcal{O}$ and $\mathcal{P}$, respectively.
Furthermore, to highlight the reconfigurability of the transmit array, we set the elevation and azimuth angles of the $i$-th transmit antenna to $\beta_i = 5i^{\circ}$ and $\alpha_i = \ang{90}$, respectively.
Fig. \ref{fig:MIMOchannel_gain_svwe} shows the computed EM-based channel matrix for the considered RAS-enabled MIMO system.
The corresponding FEM-based simulation result, obtained by computing the channel gain between every pair of transmit and receive antennas, is shown in Fig. \ref{fig:MIMOchannel_gain_hfss}.
The NMSE between the two results is approximately $-24.13$ dB.}}
\begin{figure}[t]
    \centering
    \hspace{-6.5pt}
    \begin{subfigure}[b]{0.47\linewidth}
        \centering
        \vspace{-14pt}
        \includegraphics[trim=30 30 29 25, clip, width=\textwidth]{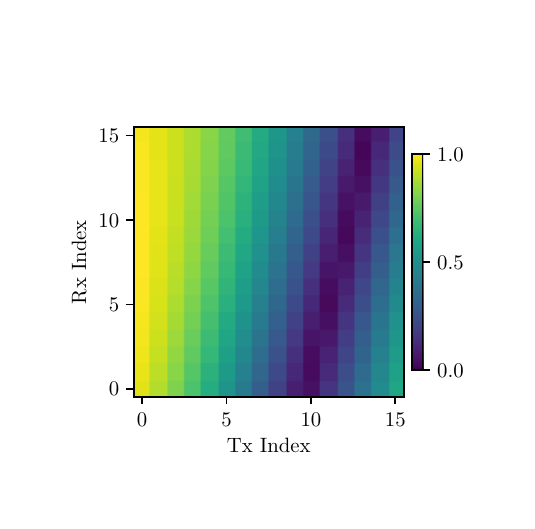}
        \caption{Derived channel model.}
        \label{fig:MIMOchannel_gain_svwe}
    \end{subfigure}
	\hskip 0.04\columnwidth
    \begin{subfigure}[b]{0.47\linewidth}
        \centering
        \vspace{-14pt}
        \includegraphics[trim=30 30 29 25, clip, width=\textwidth]{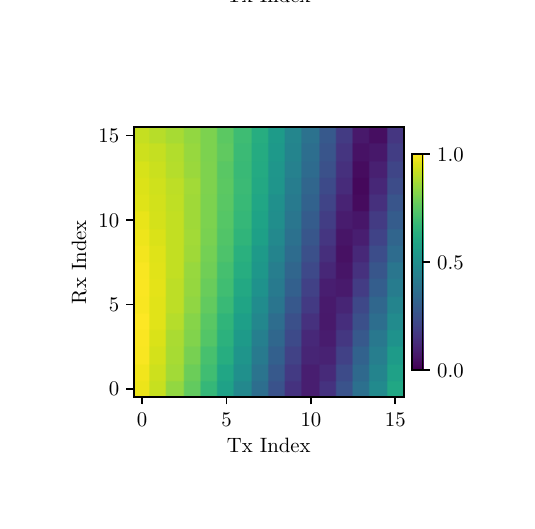}
        \caption{FEM full-wave simulation.}
        \label{fig:MIMOchannel_gain_hfss}
    \end{subfigure}
    \caption{\color{black}{The normalized channel gain of the considered $16 \times 16$ RAS-enabled MIMO system.}}
    \label{fig:MIMOchannel_gain}
\end{figure}

Both Fig.~\ref{fig:channel_gain} and Fig.~\ref{fig:MIMOchannel_gain} show excellent agreement between the analytically derived channel gain and the FEM-based simulation results, thereby clearly demonstrating that the proposed EM-based model accurately characterizes the channel characteristics of RAS-enabled MIMO systems.

\subsection{Impact of Transmit Antenna Position and Orientation on Achievable Rate}

{\color{black}{To separately investigate the impact of the transmit antenna's orientation and position on the achievable communication rate, defined by $\log \left( 1 + \vert h_{ij} \vert^2\ / \sigma_{\mathrm{n}}^2 \right)$, we conduct the following two experiments:
1) the antenna's orientation is fixed at $\alpha = \ang{0}$, $\beta = \ang{0}$, while its position is varied along the $x$-axis over a large interval $[-30\lambda,30\lambda]$;
2) the transmit antenna's position is fixed at $\mathcal{O}$ and its azimuth angle is set to $\alpha=\ang{20}$, while the elevation angle $\beta$ is varied over $[0,\pi]$.

\begin{figure}[t]
	\centering
	\setlength{\abovecaptionskip}{4pt}
	\vspace{10pt}
	\hspace{-9.99mm}
	\hspace{9mm}
	\includegraphics[trim=10 5 10 25, clip, width=0.406\textwidth]{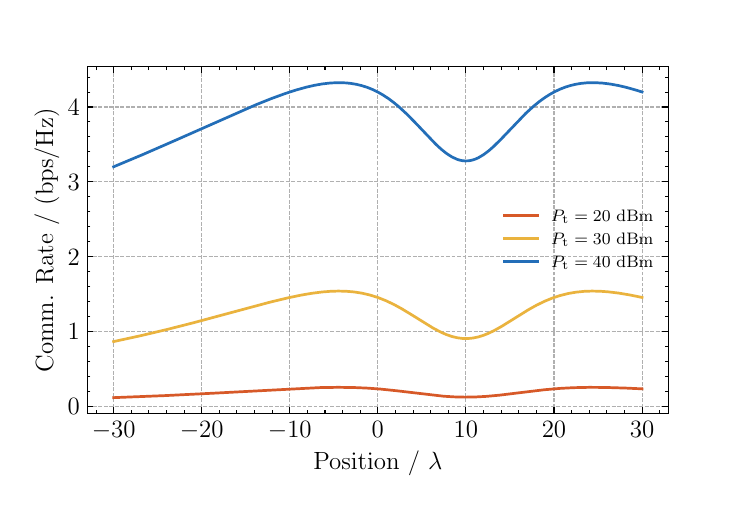}
	\caption{The achievable communication rate versus transmit antenna position for various transmit power.}
	\label{fig:rate_vs_pos}
	\vspace{2mm}  
\end{figure}

\begin{figure}[t]
	\setlength{\abovecaptionskip}{8pt}
	\centering
	\hspace{-14mm}
	\hspace{9mm}
	\includegraphics[trim=31 5 10 25, clip, width=0.4\textwidth]{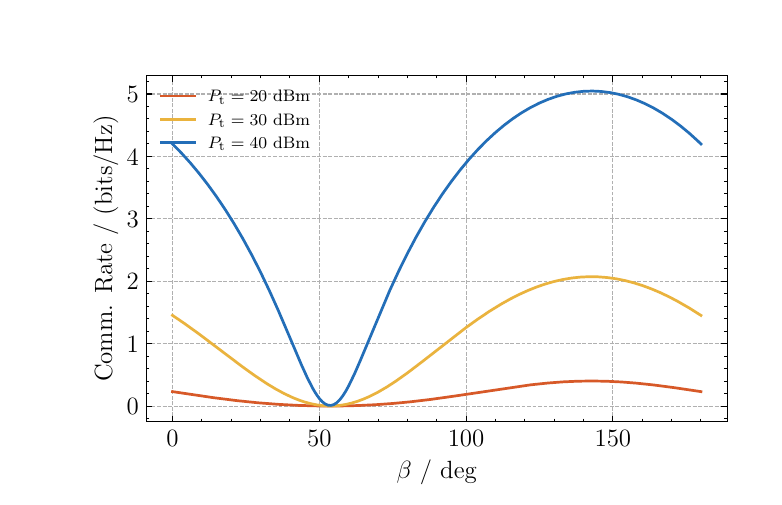}
	\caption{The achievable communication rate versus transmit antenna orientation for various transmit power.}
	\label{fig:rate_vs_beta}
	\vspace{0mm}  
\end{figure}

{\color{black}{As observed in Fig.~\ref{fig:rate_vs_pos}, compared to the communication rate achieved when the 
transmit antenna is located at $\mathcal{O}$, moving the antenna yields at most a $10.3$\% rate improvement at transmit power of $20$ dBm.
Moreover, as the transmit power increases to $30$ dBm and $40$ dBm, this gain decreases to $6.9$\% and $3.5$\%, respectively.
In contrast, in Fig.~\ref{fig:rate_vs_beta}, when taking the configuration with elevation angle $\beta = \ang{0}$ (typically representative of a fixed antenna deployment) as the baseline, adjusting $\beta$ improves the communication rate by approximately $73$\%, $42$\%, and $20$\% at the same transmit power levels.}}
{\color{black}{These results reveal that compared to the antenna position, the antenna orientation has a more pronounced impact on the achievable communication rate.
This finding is also attributed to the proposed EM-based channel model that accurately captures the impact of antenna orientation on channel gain in RAS-enabled communication systems.}}


\section{Conclusion}
This paper presented an EM-based channel model tailored for RAS-enabled MIMO systems.
By leveraging the SVWE of the EM field, we characterized the EM transmission between transmit and receive antennas, leading to a channel model that intrinsically incorporates the vectorial nature of EM fields, enabling accurate representation of polarization mismatch effects on the channel gain.
Numerical simulations validate the effectiveness and accuracy of the proposed model and further reveal the significant impact of transmit antenna orientation on achievable communication rates, thereby laying a foundational framework for the design and optimization of high-performance 6G communication systems based on RASs.

\appendices

\section{Introduction of SVWFs}\label{sec:appendix_intro_svwfs}
{\color{black}{SVWFs constitute a set of solutions to the vector Helmholtz equation expressed in spherical coordinates, and form a complete and orthogonal basis for representing electromagnetic field distributions. The explicit expressions of SVWFs are given by~\cite{Hansen_1988}
\begin{equation}\label{eq:F1mn}
	\begin{split}
		\mathbf{F}^{(c)}_{1mn}(r,\theta,\varphi) = &\frac{1}{\sqrt{2\pi}}\frac{1}{\sqrt{n(n+1)}}\left(-\frac{m}{|m|}\right)^{m} \\ 
		&\biggl\{z_{n}^{(c)}(k r)\frac{\jmath m\bar{P}_{n}^{|m|}(\cos\theta)}{\sin\theta}\mathrm{e}^{\jmath m\varphi}\hat{\theta}\\
  		&-z_{n}^{(c)}(k r)\frac{\mathrm{d}\bar{P}_{n}^{|m|}(\cos\theta)}{\mathrm{d}\theta}\mathrm{e}^{\jmath m\varphi}\hat{\varphi}\biggr\}
	\end{split}
\end{equation}
and
\begin{equation}\label{eq:F2mn}
	\begin{split}
		\mathbf{F}^{(c)}_{2mn} & (r,\theta,\varphi) = \frac{1}{\sqrt{2\pi}}\frac{1}{\sqrt{n(n+1)}}\left(-\frac{m}{|m|}\right)^{m} \\
& \biggl\{ \frac{n(n+1)}{k r}z_{n}^{(c)}(kr)\bar{P}_{n}^{|m|}(\cos\theta)\mathrm{e}^{\jmath m\varphi}\hat{r}\\
&+\frac{1}{k r}\frac{\mathrm{d}}{\mathrm{d}(k r)}\left\{k rz_{n}^{(c)}(kr)\right\}\frac{\mathrm{d}\bar{P}_{n}^{|m|}(\cos\theta)}{\mathrm{d}\theta}\mathrm{e}^{\jmath m\varphi}\hat{\theta}\\
&+\frac{1}{k r}\frac{\mathrm{d}}{\mathrm{d}(k r)}\left\{k rz_{n}^{(c)}(kr)\right\}\frac{\jmath m\bar{P}_{n}^{|m|}(\cos\theta)}{\sin\theta}\mathrm{e}^{\jmath m\varphi}\hat{\varphi} \biggr\},
	\end{split}
\end{equation}
where $\bar{P}^m_n(\cos \theta)$ is the normalized associated Legendre function and $z_{n}^{(c)}(kr)$ is radial function which is specified by index $c$ as
\begin{subequations}\label{eq:zn}
	\begin{align}
  		&z_n^{(1)}(kr)	=	j_n(kr), \\ 
  		&z_n^{(2)}(kr)	=	n_n(kr), \\ 
  		&z_n^{(3)}(kr)	=	j_n(kr) + \jmath n_n(kr), \\ 
  		&z_n^{(4)}(kr)	=	j_n(kr) - \jmath n_n(kr).
  	\end{align}
\end{subequations}
$j_n(kr)$ and $n_n(kr)$ are spherical Bessel function and spherical Neumann function, respectively.
}}

\section{Proof of Proposition \ref{prop:hij}}\label{sec:appendix_prof_prop_hij}

We first examine the transformation relationship of a single SVWF $\mathbf{F}_{smn}^{(c)}$ between the two coordinate systems $(x_\mathrm{t},y_\mathrm{t},z_\mathrm{t})$ and $(x_\mathrm{r},y_\mathrm{r},z_\mathrm{r})$ depicted in Fig. \ref{fig:MIMO_setting_illu}.
From a spatial perspective, the transformation from  $(x_\mathrm{t},y_\mathrm{t},z_\mathrm{t})$ to $(x_\mathrm{r},y_\mathrm{r},z_\mathrm{r})$ can be realized through the following sequence of operations:
\begin{itemize}
	\item Rotate $(x_\mathrm{t},y_\mathrm{t},z_\mathrm{t})$ by an angle of $-\beta_{i}$ about the $y_\mathrm{t}$-axis to obtain $(x_1,y_1,z_1)$;
	\item Rotate $(x_1,y_1,z_1)$ by an angle of $-\alpha_{i}$ about the $y_1$-axis to obtain $(x_2,y_2,z_2)$;
	\item Rotate $(x_2,y_2,z_2)$ by an angle of $\varphi_{ij}$ about the $z_2$-axis to obtain $(x_3,y_3,z_3)$;
	\item Rotate $(x_3,y_3,z_3)$ by an angle of $\theta_{ij}$ about the $y_3$-axis to obtain $(x_4,y_4,z_4)$;
	\item Translate $(x_4,y_4,z_4)$ by a distance $\Vert \mathbf{d}_{ij} \Vert$ along the $z_4$-axis to obtain $(x_5,y_5,z_5)$;;
	\item Rotate $(x_5,y_5,z_5)$ by an angle of $-\theta_{ij}$ about the $y_5$-axis to obtain $(x_6,y_6,z_6)$;
	\item Rotate $(x_6,y_6,z_6)$ by an angle of $-\varphi_{ij}$ about the $z_6$-axis to arrive at the target coordinate system $(x_\mathrm{r},y_\mathrm{r},z_\mathrm{r})$;
\end{itemize}
The SVWF transformation corresponding to each of the above steps is given by (\ref{eq:rotation_formula}) or (\ref{eq:translation_formula}).
Specifically, for an SVWF $\mathbf{F}_{smn}^{(c)}(r_\mathrm{t},\theta_\mathrm{t},\varphi_\mathrm{t})$ defined in the coordinate system $(x_\mathrm{t},y_\mathrm{t},z_\mathrm{t})$, its representation in the coordinate system $(x_\mathrm{r},y_\mathrm{r},z_\mathrm{r})$ is given by
\begin{equation}\label{eq:Fsmn_general_transform}
	\begin{aligned}
  		\mathbf{F}_{smn}^{(c)}  (r_\mathrm{t},\theta_\mathrm{t},\varphi_\mathrm{t}) = \sum_{\mu=-n}^{n}  \sum_{\gamma=-n}^{n} \sum_{\sigma=1}^{2}  \sum_{\substack{\nu=|\gamma| \\ \nu \neq 0}}^{N} \sum_{\rho=-\nu}^{\nu}
  		e^{\jmath [(\mu - \rho) \varphi_{ij} -\mu  \alpha_i]} \\
  		d^n_{\mu m}(-\beta_{i})  d^{n}_{\gamma \mu}(\theta_{ij}) d^{\nu}_{\rho \gamma}(-\theta_{ij})  C^{sn(c)}_{\sigma \gamma \nu}(k \Vert \mathbf{d}_{ij} \Vert) \mathbf{F}_{\sigma \rho \nu}^{(1)}(r_\mathrm{r},\theta_\mathrm{r},\varphi_\mathrm{r}).
  	\end{aligned}
\end{equation}
Insert (\ref{eq:Fsmn_general_transform}) into (\ref{eq:elec_field_as_sum_spherical_func}), we obtain
\begin{equation}\label{eq:Et_in_Or}
	\begin{aligned}
  		\mathbf{E}_{\mathrm{t}} & (r_\mathrm{r},\theta_\mathrm{r},\varphi_\mathrm{r}) = 
  		\frac{k v_i}{\sqrt{\eta}} \sum_{s=1}^{2} \sum_{n=1}^{N}\sum_{m=-n}^{n}  \sum_{\mu=-n}^{n}  \sum_{\gamma=-n}^{n} \sum_{\sigma=1}^{2}  \sum_{\substack{\nu=|\gamma| \\ \nu \neq 0}}^{N} \sum_{\rho=-\nu}^{\nu} \\
  		 & T_{smn} e^{\jmath [(\mu - \rho) \varphi_{ij} -\mu  \alpha_i]}  d^n_{\mu m}(-\beta_{i})  d^{n}_{\gamma \mu}(\theta_{ij}) d^{\nu}_{\rho \gamma}(-\theta_{ij})   \\
  		 & C^{sn(c)}_{\sigma \gamma \nu}(k \Vert \mathbf{d}_{ij} \Vert)  \mathbf{F}_{\sigma \rho \nu}^{(1)}(r_\mathrm{r},\theta_\mathrm{r},\varphi_\mathrm{r}).
  	\end{aligned}
\end{equation}
(\ref{eq:Et_in_Or}) indicates that, in the coordinate system $(x_\mathrm{r},y_\mathrm{r},z_\mathrm{r})$, the radiated field of the transmit antenna can be represented as a superposition of infinitely many standing waves.
Each standing wave can be decomposed into two counter-propagating traveling waves, i.e.,
\begin{equation}
	\begin{aligned}
  		\mathbf{F}_{\sigma \rho \nu}^{(1)}(r_\mathrm{r},\theta_\mathrm{r},\varphi_\mathrm{r}) =\frac{1}{2}\left(\mathbf{F}_{\sigma \rho \nu}^{(3)}(r_\mathrm{r},\theta_\mathrm{r},\varphi_\mathrm{r}) + \mathbf{F}_{\sigma \rho \nu}^{(4)}(r_\mathrm{r},\theta_\mathrm{r},\varphi_\mathrm{r})\right),
  	\end{aligned}
\end{equation}
where $\mathbf{F}_{\sigma \rho \nu}^{(3)}$ and $\mathbf{F}_{\sigma \rho \nu}^{(4)}$ correspond to the outward- and inward-propagating components, respectively.
Noting that only the inward-propagating wave contributes to the response at the port of the receive antenna, the effective incident electric field $\mathbf{E}_{\mathrm{r}}$ at the receiver is given by
\begin{equation}\label{eq:Er_eff_in_Or}
	\begin{aligned}
  		\mathbf{E}_{\mathrm{r}} & (r_\mathrm{r},\theta_\mathrm{r},\varphi_\mathrm{r}) = 
  		\frac{k v_i}{2\sqrt{\eta}} \sum_{s=1}^{2} \sum_{n=1}^{N}\sum_{m=-n}^{n}  \sum_{\mu=-n}^{n}  \sum_{\gamma=-n}^{n} \sum_{\sigma=1}^{2}  \sum_{\substack{\nu=|\gamma| \\ \nu \neq 0}}^{N} \sum_{\rho=-\nu}^{\nu} \\
  		 & T_{smn} e^{\jmath [(\mu - \rho) \varphi_{ij} -\mu  \alpha_i]} d^n_{\mu m}(-\beta_{i})  d^{n}_{\gamma \mu}(\theta_{ij}) d^{\nu}_{\rho \gamma}(-\theta_{ij})   \\
  		 & C^{sn(c)}_{\sigma \gamma \nu}(k \Vert \mathbf{d}_{ij} \Vert) \mathbf{F}_{\sigma \rho \nu}^{(4)}(r_\mathrm{r},\theta_\mathrm{r},\varphi_\mathrm{r}).
  	\end{aligned}
\end{equation}
According to (\ref{eq:Er_receive_signal_form}), the received signal $w_j$ at the receive antenna's port can be expressed as
\begin{equation}\label{eq:receive_w_form}
	\begin{aligned}
  		w_j = &\frac{k v_i}{2\sqrt{\eta}} \sum_{s=1}^{2} \sum_{n=1}^{N}\sum_{m=-n}^{n}  \sum_{\mu=-n}^{n}  \sum_{\gamma=-n}^{n} \sum_{\sigma=1}^{2}  \sum_{\substack{\nu=|\gamma| \\ \nu \neq 0}}^{N} \sum_{\rho=-\nu}^{\nu}  \\
  		 &  T_{smn} R_{\sigma \rho \nu}  e^{\jmath [(\mu - \rho) \varphi_{ij} -\mu  \alpha_i]} 
  		d^n_{\mu m}(-\beta_{i})  d^{n}_{\gamma \mu}(\theta_{ij})   \\
  		 & d^{\nu}_{\rho \gamma}(-\theta_{ij})  C^{sn(c)}_{\sigma \gamma \nu}(k \Vert \mathbf{d}_{ij} \Vert) .
  	\end{aligned}
\end{equation}
By leveraging the index mapping $(s,m,n) \leftrightarrow p$ and $(\sigma,\rho,\nu) \leftrightarrow q$, 
the channel gain $h_{ij} = w_j / v_i$ between the transmit and receive antenna can be expressed in a matrix form as
\begin{equation}
	\begin{aligned}
  		h_{ij} = \mathbf{r}^T \tilde{{\mathbf{G}}}(\mathbf{n}_{i},\mathbf{d}_{ij}) \mathbf{t}.
  	\end{aligned}
\end{equation}
where the $p,q$-th element of ${\mathbf{G}}$ is given by
\begin{equation}
	\begin{aligned}\label{eq:gene_G}
  		[\tilde{{\mathbf{G}}}(\mathbf{n}_{i},& \mathbf{d}_{ij})]_{p,q} =  \frac{k}{2\sqrt{\eta}}  \sum_{\mu=-n}^{n}     \sum_{\gamma=-n}^{n} 
  		  e^{\jmath [(\mu - \rho) \varphi_{ij} -\mu  \alpha_i]} \\
  		 & d^n_{\mu m}(-\beta_{i})  d^{n}_{\gamma \mu}(\theta_{ij})  d^{\nu}_{\rho \gamma}(-\theta_{ij})  C^{sn(c)}_{\sigma \gamma \nu}(k \Vert \mathbf{d}_{ij} \Vert) 
  	\end{aligned}
\end{equation}
Under the far-field assumption, i.e., $k \Vert \mathbf{d}_{ij} \Vert \rightarrow \infty$, and using the asymptotic expression of the translation coefficient $C^{sn(3)}_{\sigma \mu \nu}(k \Vert \mathbf{d}_{ij} \Vert)$ \cite[eq.~(A3.22), (A3.23), (A3.23)]{Hansen_1988},
(\ref{eq:gene_G}) can be further simplified to
\begin{equation}
	\begin{aligned}\label{eq:far_field_G}
  		[\tilde{{\mathbf{G}}}(\mathbf{n}_{i},\mathbf{d}_{ij})]_{p,q} =  \frac{e^{\jmath k \Vert \mathbf{d}_{ij} \Vert}}{{k \Vert \mathbf{d}_{ij} \Vert}} [{{\mathbf{G}}}(\mathbf{n}_{i},\mathbf{d}_{ij})]_{p,q},
  	\end{aligned}
\end{equation}
where $[{{\mathbf{G}}}(\mathbf{n}_{i},\mathbf{d}_{ij})]_{p,q}$ is defined in (\ref{eq:G_pq}), which completes the proof.
\qed

\bibliographystyle{IEEEtran}
\bibliography{references}

\end{document}